\documentclass[10pt,letterpaper,margin=1in, conference]{ieeeconf}
\IEEEoverridecommandlockouts                            
\overrideIEEEmargins


\usepackage{color} 
\usepackage{amsfonts}
\usepackage{graphicx} 
\usepackage{latexsym,amsmath} 
\usepackage{mathrsfs} 
\usepackage{booktabs}
\usepackage{multicol}
\usepackage[makeroom]{cancel}
\usepackage{soul,xcolor}

\usepackage{amsthm}

%


\addtolength{\topmargin}{+0.01in}

\title{Decentralized Frequency Control using Packet-based Energy Coordination$^\ast$\thanks{$^\ast$This work was supported by the U.S. Department of Energy's ARPA-E award DE-AR0000694.}}

\author{Hani Mavalizadeh
    $\;\;\;$ Luis A. Duffaut Espinosa$^\dagger$\thanks{$^\dagger$ This author was also supported by NSF grant CMMI-1839387.}
    $\;\;\;$ Mads R. Almassalkhi $^\ddag$\thanks{$^\ddag$ M. Almassalkhi is co-founder of startup company Packetized Energy, which is commercializing aspects related to PEM.} 
    \thanks{The authors are with the Department of Electrical and Biomedical Engineering, University of Vermont, Burlington, VT 05405 USA.} 
}

%


\allowdisplaybreaks[4]


\graphicspath{{figures/}}

\newtheorem{theorem}{Theorem}
\newtheorem*{remark}{Remark}
\newtheorem{definition}{Definition}
\newtheorem{assumption}{Assumption}
\theoremstyle{definition}

\addtolength{\topmargin}{+0.1in}

\begin{document}
	\bstctlcite{IEEEexample:BSTcontrol}

	\maketitle
	
	\begin{abstract}
		This paper presents a novel frequency-responsive control scheme for demand-side resources, such as electric water heaters. A frequency-dependent control law is designed to provide damping from distributed energy resources (DERs) in a fully decentralized fashion. This local control policy represents a frequency-dependent threshold for each DER that ensures that the aggregate response provides damping during frequency deviations. The proposed decentralized policy is based on an adaptation of a packet-based DER coordination scheme where each device send requests for energy access (also called an ``energy packet'') to an aggregator. The number of previously accepted active packets can then be used a-priori to form an online estimate of the aggregate damping capability of the DER fleet in a dynamic power system. 
		A simple two-area power system is used to illustrate and validate performance of the decentralized control policy and the accuracy of the online damping estimating for a fleet of 400,000 DERs.   
	\end{abstract}
	
	\begin{keywords}
		Decentralized control, distributed energy resources, thermostatically controlled loads, primary frequency control.
	\end{keywords}
	
	\section{Introduction}\label{sec:intro}
	Due to environmental concerns and energy policy, the integration of inverter-based renewable energy sources (RES), such as wind and solar PV in power systems is increasing. However, the intermittent nature of these sources has introduced new challenges. One of these is the decrease in power system inertia, which is defined as the power system's ability to oppose changes in frequency \cite{ercot2018}. The kinetic energy stored in rotating machines such as synchronous generators and loads is the main source of inertia in conventional power systems. With decreasing costs of renewable generation, some of the conventional generators will not be competitive and, thus, will go offline, which results in increased renewable generation capacity on the grid and reduction in the system's inertia. This subsequently increases the rate of change of frequency (ROCOF) and makes primary frequency control more challenging. Another factor, affecting frequency response is frequency-dependent loads. This is usually modelled by a damping constant which is defined as the percentage change in total system load in response to frequency change~\cite{ercot2018}. One way to remedy this is to require alternative sources of damping and inertia.For example, during a sudden loss of generation some of the stored kinetic energy in synchronous machines is used to compensate for frequency deviations. 
	
	Using spinning reserves to compensate for variability in renewable generation is an expensive solution for power systems with high penetration of RES~\cite{brokish2009}. With the development in sensor technologies, sensor cost has been reduced considerably, which has enabled DERs to provide fast ancillary services, such as frequency control. To have effective primary frequency control, DERs should react within 1 to 2 seconds~\cite{Lundstrom2018PESGM}. Much work has been done to find an efficient and fast method to use loads endowed with local control policies to provide frequency regulation ~(\cite{ brokish2009,amini2016ISGT,wang2019SG,wang2019PES, dorfler2016ACC,boyd2011FTML,schweppe1980PAS,zhao2015ACC, nandanoori2018CTA, schweppe1980PAS, tinderman2015CST,zhang2012CDC}).
	
	    Different structural coordination models are used for DER coordination. In centralized control, all of the measurements are sent to a coordinator and then decisions are broadcast by coordinator to each individual device. Distributed control requires coordination from a centralized agent but includes local sensing, computation, and control. Fully decentralized control has no centralized coordinator and the system response is a function of only local sensing, computation, and actuation.

	 Fully decentralized frequency control enables devices to make local decisions during frequency disturbances by using a local control law for DERs and generators. A distributed secondary frequency control for multi-area systems is presented in~\cite{wang2019SG} which restores the nominal frequency while minimizing the regulation costs. ~\cite{wang2019PES} estimates virtual load demand for controllable generators based on local frequency but the approach does not consider demand side frequency control. ~\cite{ dorfler2016ACC} uses gather-and-broadcast and generalized continuous-time feedback control version of the dual decomposition method~\cite{ boyd2011FTML} to dispatch DERs during the frequency deviations.
	
	Different methods for prioritization of DERs are presented in the technical literature. In~\cite{schweppe1980PAS}, a frequency adaptive power-energy re-scheduler (FAPER) is introduced, which sets frequency-dependent thresholds based on the  temperature of temperature-bound appliances such as EWH or air conditioners. In~\cite{brokish2009}, those same frequency thresholds are randomized in order to overcome the problem of synchronization in FAPER. In~\cite{nandanoori2018CTA} local measurements are sent to the aggregator within 5 to 15 minutes and a fitness value is calculated for each DER. These fitness values are then used by the aggregator to assign frequency-response thresholds in a prioritized manner. While the method works fine for small aggregation of DERs, sharing local measurements from large-scale fleets of DERs with an aggregator in real-time incurs large communication costs.
	Another important aspect in the problem of regulation is to guarantee the stability of the system when using DERs for primary frequency control. In~\cite{zhao2015ACC} a fully decentralized integral control is employed. This approach, yields global asymptotic stability condition but it does not guarantee the results to be optimal or feasible for the economic dispatch problem. This deficiency is addressed by introducing a distributed averaging-based integral (DAI) control, which operates by sensing local frequency and allowing communication between neighboring system busses.
		
	Several packet-based energy management systems have been proposed to coordinate DERs~\cite{ zhang2012CDC, lee2011PES, bashash2019energies}. In packet based systems, loads request for energy packets with given duration and amount. Each device is equipped with a timer which turns the device OFF when the packet length is completed. In this paper, the focus is on packetized energy management (PEM), which is a packet-based approach. Most frequency regulation approaches using DERs rely on communication between DERs and the (centralized) aggregator. However, sending online measurements from large number of appliance to aggregator requires a reliable, low-latency communication network, which can be costs at scale. Therefore, a fully decentralized frequency control scheme that does not require real time communication with aggregator is desired. This requires adapting PEM to the appropriate timescale, which is the main contribution of this paper. Thus, this paper develops a novel, and fully decentralized primary frequency control scheme within the existing PEM literature. Specifically, each controllable load responds to a locally measured frequency deviation based on a  pre-determined local control law and timer state which begets fast response and provides damping. The distribution of timers is available to the aggregator since it knows how many devices are accepted during any time period. These data can be used to accurately predict the effect of any frequency deviation on the fleet, thus quantifying the damping available in real-time, which is valuable to grid operators. The proposed scheme can be applied to any packet-based energy management. The main contributions of this paper can be summarized as follows:
	\begin{itemize}
		\item A responsive and fully decentralized frequency control policy is designed within a packet-based energy management systems that automatically prioritizes resources based on local dynamic states. 
		\item The equivalent damping for a fleet of resources operating with PEM is estimated analytically and enables the aggregator to quantify, in real-time, the effect of packetized DERs on primary frequency control prior to any event. 
		\item Simulation-based analysis is used to validate the performance of the proposed local control policy and the online estimate of damping provided by the aggregator on a dynamic, two-area model under various operating conditions.
	\end{itemize}
	This paper is organized as follows: Section~\ref{sec:model}, provides the network model and PEM preliminaries. The describes the decentralized packet-based frequency control scheme in described in Section~\ref{sec:decentralized}. Section`~\ref{sec:predict} provides the method for quantification of PEM damping by the aggregator. In Section.~\ref{sec:results}, an illustrative simulation for two-area system is shown to verify performance and validity of the methodology. The conclusions are given in Section~\ref{sec:conclusion}.

	\section{Modeling Preliminaries}\label{sec:model}
	
	In this section, first the network model is presented. This is followed by a brief summary of the PEM scheme. Specifically, the concepts of packet duration and packet interruption are provided. These are at the core of the proposed decentralized frequency control scheme.
	
	\subsection{Network Model}
     Let $G=(\mathcal{V},\mathcal{E}$) be a graph representing the topology of a transmission network, where $\mathcal{V}$ is the set of $N$ system nodes $\mathcal{V}=\{1,...,N \}$, $\mathcal{E}$ is the set of edges between nodes ($\mathcal{E} \subseteq \mathcal{V}\times \mathcal{V}$). If a pair of nodes $i$ and $j$ are connected by a tie line, then the tie line $(i,j) \in \mathcal{E}$. Each node represents a frequency control area which is modeled with aggregate generation and loads for each area. The load consists of two components: 1) uncontrollable loads and 2) PEM-enabled loads. The dynamics of the angular frequency deviations from nominal, $\Delta \omega(t) := \omega(t)-\omega_0 $, in each area~$j$ is determined by the swing equations~\cite{kundur}:
	\begin{subequations} 
	\begin{align}  
   \: \Delta\dot{\theta}_j(t)=& \Delta\omega_j(t),\\
	M_{j}\Delta\dot{\omega}_{j}(t) =& \Delta P^{\text{G}}_{j}(t)-\Delta P^{\text{L}}_{j}(t)-\Delta P^{\text{PEM}}_{j}(t)-D_{j}	\Delta \omega_{j}(t) 	\notag\\
	&  + \sum_{i:(i,j) \in \mathcal{E}}^{N} B_{ij}\left(\Delta \theta_{i}(t)-\Delta \theta_{j}(t)\right),  \label{eq:swingDyn2}
	\end{align}
	\end{subequations}
	where $\Delta P^{\text{G}}_{j}$, $\Delta P^{\text{PEM}}_{j}$ and $\Delta P^{\text{L}}_{j}$ are the change in generation, PEM controlled loads, and uncontrollable loads at node $j$ after the occurrence of a disturbance, respectively. $D_{j}$ and $M_j$ are the damping provided by motor loads and generator rotational losses in [MW/Hz] and the power system inertia in [s]. The last term in~\eqref{eq:swingDyn2} describes the power flow between area $i$ and $j$. DC power flow is used to calculate the power flow as in~\cite{liu2011PESG}. Before the disturbance, the frequency is assumed to be $f_{0}=60$Hz, which in radians amounts to $\omega_0:=2\pi f_0$. The emphasis of this manuscript is on frequency response and active power changes, therefore it is reasonable to neglect bus voltages and reactive power for the sake of simplicity~\cite{zhao2014AC}.
	
    To model the generators' primary frequency response, the effect of the generators' turbine is included as follows
	\begin{align}\label{eq:turbine}
	\tau_j \Delta \dot{P}^{\text{G}}_{j}(t)=\Delta P^{\text{set}}_{j}(t)-\Delta P^{\text{G}}_{j}(t)-\frac{\Delta \omega_{j}(t)}{R_{j}}, 
	\end{align}
	where $\tau_j$ is the time constant [s] and $\Delta P^{set}_{j}$ is the change in generation set point [MW] compared to the nominal state and $R_{j}$ is the equivalent droop of all the generators in area $j$ in [Hz/MW]. During primary frequency control, $\Delta P^{set}_{j}$ does not change. Together,~\eqref{eq:swingDyn2} and~\eqref{eq:turbine}, model the frequency dynamics related to primary frequency control in a transmission network. This model will be used in Section~\ref{sec:results} to quantify the role of the proposed decentralized DER control policy. 
	
	\subsection{Packetized energy management}\label{subsec:pem}
	
	 	\begin{figure}
 		\centering
 		\includegraphics[width=1\linewidth]{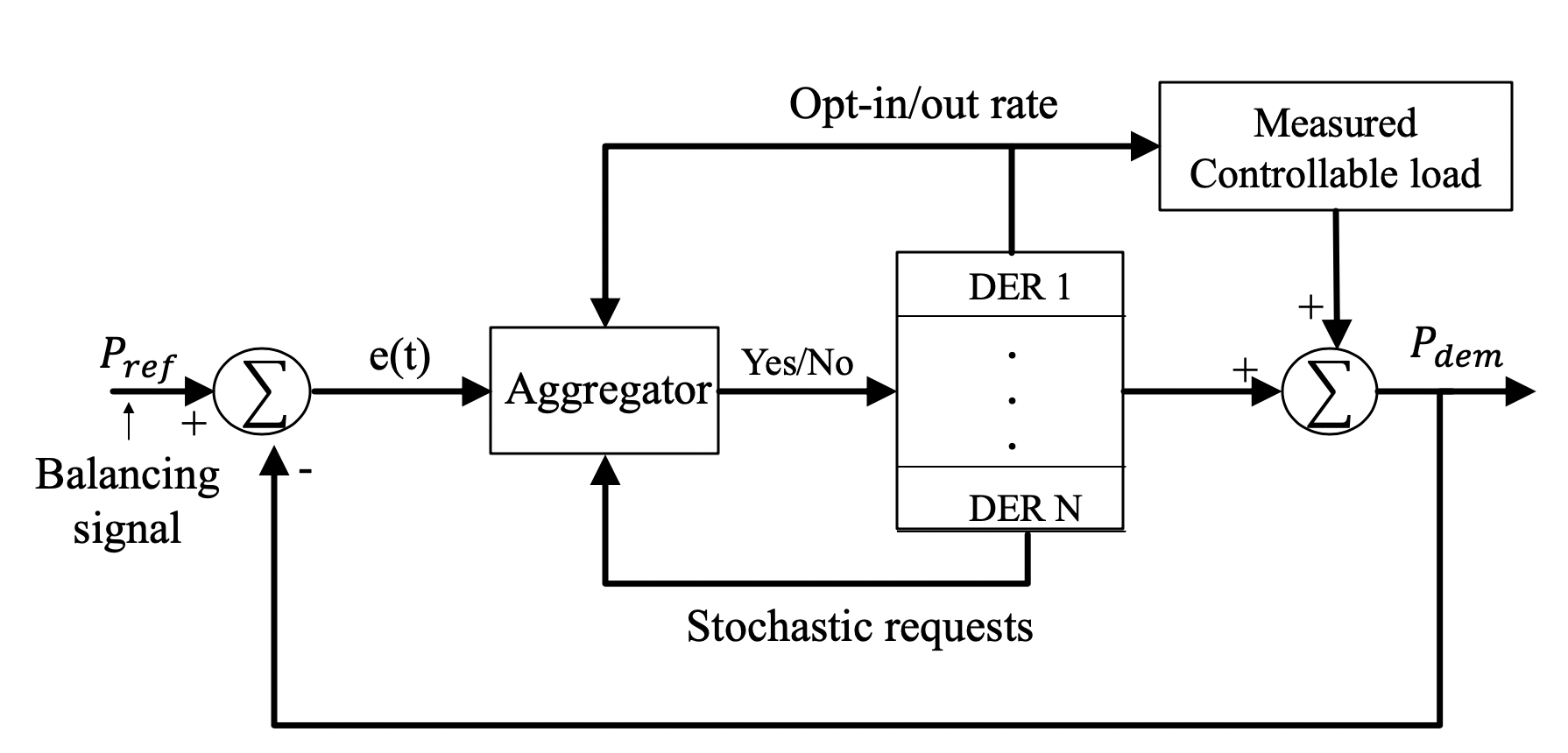}
 		\caption{The closed-loop feedback system for the packetized energy management with the reference signal $P_{ref}$ provided by the Grid Operator.}
 		\label{fig:blockdiagram}
 	\end{figure}
 	
    Previously, PEM has been used to provide load balancing and regulation grid services. In this paper, PEM capabilities are extended with a novel decentralized DER control policy that dynamically prioritizes which packets to interrupt to provide damping. This paper proposes a timer-based prioritization, but the underlying ideas also apply to other ``fitness''-based DER prioritization schemes (e.g.,~\cite{nandanoori2018CTA}). 
    
    For completeness, a brief description of PEM is provided next and illustrated in Fig.~\ref{fig:blockdiagram}. Under PEM, DERs request stochastically for energy packets from the grid based on their local dynamic state e.g. temperature for thermostatically controlled loads (TCL) or state of charge (SoC) for batteries and electric vehicles (EVs). This randomization limits synchronization between DERs. For example, an electric water heater with lower temperature is more likely to request an energy packet than devices with higher temperatures. The probability of requesting a packet at time-step k is modelled as a discrete-time variable as
    \begin{align}\label{eq:request}
	 P(T_{n}[k])=1-e^{-\mu (T_{n}[k])\Delta t},
	\end{align}
     where $T_n[k]$ is the DER's temperature and $\mu$ is a design parameter defined as~\cite{almassalkhi2018chapter}:
	\begin{align*}
	\lefteqn{\mu(T_{n}[k])=}  \\
	& \; \begin{cases} \mbox{0,} & \mbox{$T_{n}[k] \geq T^{\text{max}}$}\\ \mbox{$m_{R}(\frac{T^{\text{max}}_{n}-T_{n}[k]}{T_{n}[k]-T^{\text{min}}_{n}}) . \frac{T^{\text{set}}_{n}-T^{\text{min}}_{n}}{T^{\text{max}}_{n}-T^{\text{set}}_{n}}$,} & \mbox{$T^{\text{min}}_{n}<T_{n}[k]<T^{\text{max}}_{n}$}\\ \mbox{$\infty$,} & \mbox{$ T^{\text{min}}_{n} \geq T_{n}[k]$} \end{cases} \
	\end{align*} 
    with $m_{R}>0$ is the mean time-to-request (MTTR) and $T^{\text{max}}_{n}$, $T^{\text{min}}_{n}$, $T^{\text{set}}_{n}$ are maximum, minimum and setting temperatures, respectively.  
	
    The DERs' requests are then sent to the aggregator asynchronously and, based on grid or market reference tracking signal, are either accepted (e.g., DER is allowed to turn ON) or denied (e.g., DER is not allowed to turn ON). When an energy request is accepted, then a local timer is triggered and the DER will turn ON for a pre-defined period of time, called epoch length $(\delta)$. When the epoch length is completed, the device will turn OFF automatically and its local timer is reset. The local timer is defined as:
	\begin{align} \label{eq:timer}
	 t_{n}[k+1]= 
	        \left\{
        	 \begin{array}{ll} 
        	    t_{n}[k]+\Delta t,   & \text{if } C_n[k]= 1\\
        	    0                    & \text{otherwise}  
        	\end{array}    
           \right.,
	\end{align}
	where $\Delta t$ is the sampling time. The number of bins is $n_p :=\lfloor \frac{\delta}{\Delta t} \rfloor$. In particular, when the $n$-th DER has its request accepted at time step $k$ and is consuming its packet, then $C_n[l] = 1$ for all $l \in [k+1,k+n_p]$ and $C_n[k]= 0$, if request is denied. According to~\eqref{eq:timer}, when DER is in standby mode, $C_{n}$ and $t_n$ are both set to zero. Using the internal timer of each device, a distribution of timer states can be constructed at the aggregator that keeps track of the number of accepted requests at each time during the latest epoch. This is represented in~\cite{duffaut2017CDC} by 
	 \begin{align}  \label{eq:histogram}
    	 x[k+1] = Mx[k]+Bq^{+}[k],
        \end{align}
    where $x \in \mathbb{R}^{n_p\times 1}$ is the vector of number of DERs in each bin, $q^{+} \in \mathbb{R}$ is the number of accepted requests at the current time, $B \in \mathbb{R}^{n_p \times 1}$ is responsible for allocating the new accepted requests into the first bin. That is, $B$ is a zero matrix except for its first element whose value is 1. $M \in \mathbb{R}^{n_p \times n_p}$ is a zero matrix except for its first lower diagonal whose components are 1. The number of devices completing their packets at time-step $k+1$ is equal to the number of devices in the last bin of the timer distribution, $x_{n_p}[k]$. Consider using PEM for tracking a slowly moving reference signal and the frequency is maintained close to the nominal frequency. Fig.~\ref{fig:localtimer} shows the distribution of timer states ($t_n[k]$) and temperatures ($T_n[k]$) for devices $n$ that are ON at time-step $k$ for an example system of 400,000 EWHs. The number of bins for temperature and timer status is 30 and 10, respectively and $\delta$=3 minutes. The histogram of timers in~Fig.~\ref{fig:localtimer} is constructed using~\eqref{eq:timer} and~\eqref{eq:histogram} while the histogram of temperatures must be estimated. Note that some of the devices with high $t_n$ have low temperatures because of recent high water usage. In addition, some of devices with high temperature have just started their timer. This is caused by the random nature of PEM requests.  The clear correlation between temperature and timer occurs because the more energy that a device consumes, the higher the temperature. For simplicity, The DER capacities are assumed homogeneous. The dashed line shows the average number of devices in each bin, which is
	\begin{align}\label{eq:xbar}
	\bar{x}[k]=\frac{1}{n_p} (\mathbf{1}^{\top}_{n_p} x[k]).
	\end{align}
	where $\mathbf{1}_{n_p}\in\mathbb{R}^{n_p}$ is vector of ones.
	
	\begin{figure}
		\centering
		\includegraphics[width=1\columnwidth]{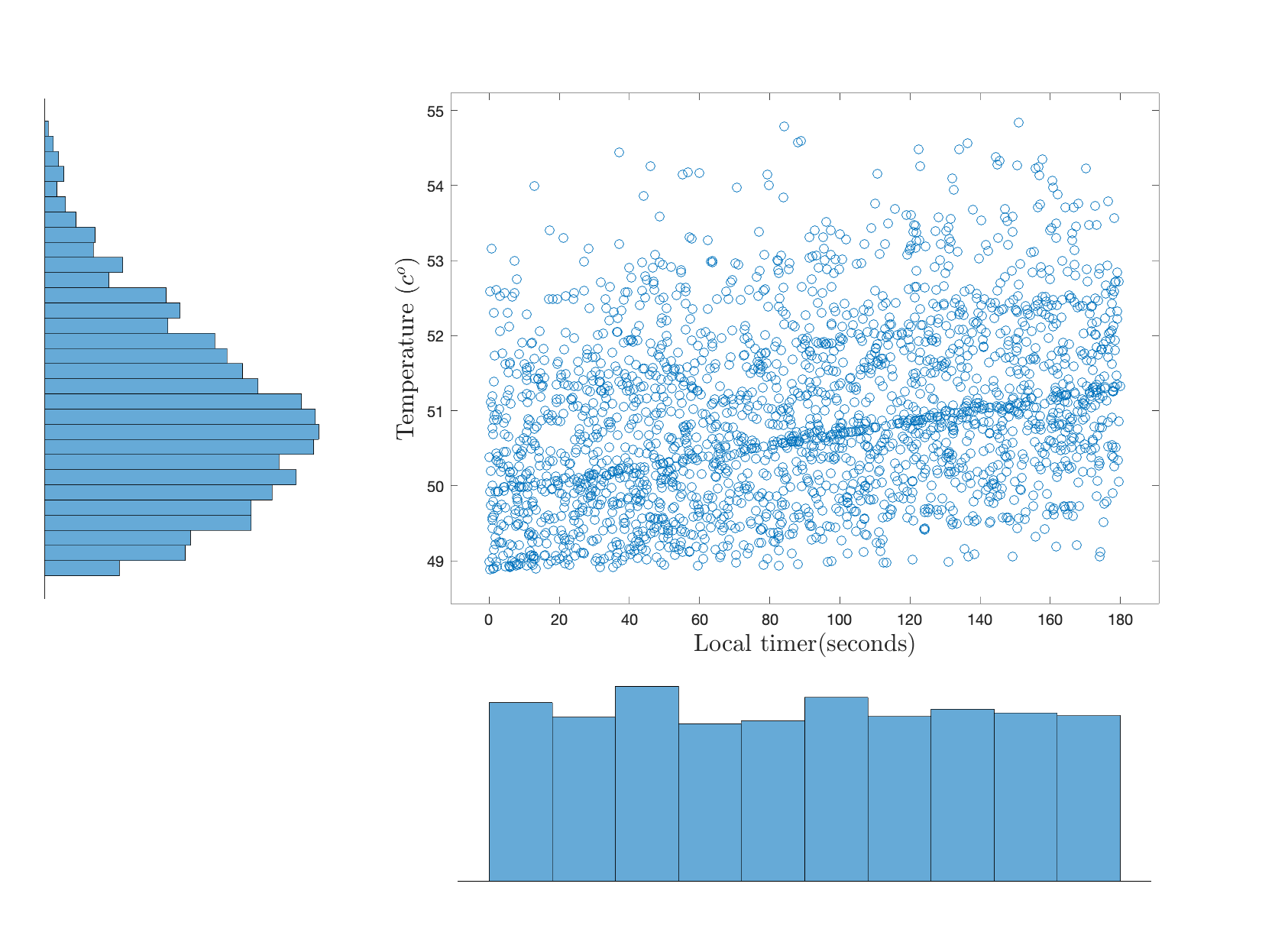}
		\caption{The state of local timers and temperature of ON devices before the disturbance. }
		\label{fig:localtimer}
	\end{figure}

    
	To guarantee quality of service (QoS) for the end-consumers, packetized-enabled DERs are endowed with opt-out logic~\cite{duffaut2018PSCC}. This means that if the DER's dynamic state exceeds predefined comfort limits, the device will exit PEM temporarily and return to its default control logic (e.g., TCL) until the dynamic state is returned to within comfort limits. Since the dynamic state evolves much more slowly than the grid frequency and power deviations and the duration of frequency event is shorter than 30 seconds, the opt-out behaviors are negligible. In fact, since the duration of the frequency response is short, it is reasonable to assume that the DERs' energy states are fixed for the duration of the frequency event. 
	
	The timer state distribution is clearly a function of past aggregator packet acceptance rates. If the fleet's aggregate power is near the nominal power reference, which is given by nominal QoS (e.g., set-point temperature), then from~\cite{duffaut2020CDC}, the rates of requests and accepted requests are uniformly distributed. That is, $x[k] \approx \bar{x}$ for all bins. These conditions are representative of a packetized DER fleet tracking ISO regulation signals, such as tracking ISO-NE's energy neutral or PJM's Reg-D, and ensure that the net energy exchanges of the fleet are close to zero. 
	To make the scheme more responsive to frequency deviation, packets can be actively interrupted. The packet interruption is defined next.
	\begin{definition}{(Packet Interruption)}
    The interruption of a packet is the instantaneous termination of a packet before the end of its epoch length ($t_n < \delta$) due to local conditions.
    \end{definition}	
	With packet interruptions, PEM-related demand can be actively modified by appropriately interrupting packets. For example, for an under-frequency event, packet interruptions enable DERs to turn OFF devices based on their local timer and/or temperature state, which offer a simple scheduling mechanism for dynamically prioritizing which devices turn OFF at what frequency as a contingency unfolds. This approach also limits potentially harmful synchronization effects.
	
	The proposed decentralized frequency control scheme is provided in the next Section, where PEM-enabled DERs leverage information about their local packet duration and timer state.

	\section{Decentralized frequency control for PEM }\label{sec:decentralized}
	
	To design a decentralized control policy for PEM, one needs to consider the local data/measurements available to each DER $n$ at node $i$: 1) frequency, $f_i[k]$, 2) temperature, $T_n[k]$, and 3) timer state, $t_n[k]$. These enable each DER to make local ON/OFF decisions. In that context, each DER decides whether or not requests should be sent to the aggregator. In other words, when a PEM-enabled DER senses that the frequency deviation exceeds a predefined dead-band, the DER blocks requests locally and switches to a decentralized control policy. This is detailed next. 
	
    A dead-band around $f_0$ is set in order to define the transmission reliability criteria. Within this dead-band, $\Delta f_{\text{db}}$, conventional PEM is used to provide ancillary and whole-sale energy market services. However, deviating beyond the dead-band represents reliability concerns, so PEM-enabled DERs switch to decentralized control policy to actively support primary frequency response.
    
    For the sake of simplicity, this paper focuses on EWHs. As mentioned earlier, the dynamic state of devices evolves slowly and does not change during the frequency event. In addition, the aggregator knows the timer state distribution. Therefore, one can estimate how the number of packet interruptions affect the damping provided by PEM devices ($D_{\text{PEM}}$). The analysis performed herein assumes a fleet whose aggregate energy dynamics (e.g., distribution of device temperatures) are not changing much with time. Under this assumption, the timer states distribution follows a uniform distribution, which simplifies analysis as explained next. Extending the analysis of decentralized PEM to arbitrary distribution of timer states is straightforward since the aggregator have the distribution of timer states at each time. 
    
    A naive initial approach to reduce demand implies to automatically accept or reject all packet requests during frequency disturbances. However, this does not offer a sufficient change in demand to affect the frequency response since such approach relies on slow packet completion rates. Another overly simplistic approach consists of interrupting all of the timers simultaneously when any frequency deviation occurs. This triggers a step change in demand that ignores the frequency's evolution and can cause system instabilities~\cite{mirosevic2015ESARS} if the share of DERs in the power system is significant. Therefore, it is necessary to prioritize devices so that the ones with higher timer or temperature turn OFF first in under-frequency events. In fact, one needs to dynamically interrupt the packets to reduce demand and have a meaningful effect on the frequency.
	In the proposed method, a packet interruption threshold is assigned to the local timer based on the local frequency measurement as shown in Fig. \ref{fig:diflocaltimer}. That is, when the magnitude of frequency deviation is smaller than $\Delta f_{\text{db}}$, no control action is needed. The design of this deadzone depends on power system reliability requirements defined by transmission operators. For frequency deviations between $\Delta f_{\text{db}}$ and $\Delta f_{\text{max}}$, $\eta \in [0,1]$. If the frequency deviation is larger than $\Delta f_{\text{max}}$, the value of $\eta$ will remain constant at $\eta_{\max} \in [0,1]$. Therefore, the only local design parameters are $\eta_{\text{max}}$, $\Delta f_{\text{db}}$, and $\Delta f_{\text{max}}$. For $\eta_{\text{max}}\approx 1$, the aggregate decentralized PEM response is more aggressive due to more interruptions for a given frequency deviation. A linear function is used for $\eta (\Delta f)$ which will result in an aggregate response that adds equivalent, constant damping to the system. The proposed local control law is given by:
	\begin{align}\label{eq:eta}
	\eta(\Delta f)= 
    	\begin{cases} 
    	\mbox{0,} & \mbox{$ \Delta f_{\text{db}} < \Delta f[k]$}\\ 
    	\mbox{$ \frac{\Delta f[k]-\Delta f_{\text{db}}}{\Delta f_{\text{max}}-\Delta f_{db}} \eta_{\text{max}}$,} & \mbox{$\Delta f_{\text{max}}\le \Delta f[k] \le \Delta f_{\text{db}}$}\\ 
    	\mbox{$\eta_{\text{max}}$,} & \mbox{$\Delta f[k] < \Delta f_{\text{max}}.$} 
    	\end{cases} 
	\end{align}
	\begin{figure}[t]
		\centering
		\includegraphics[width=0.5\linewidth]{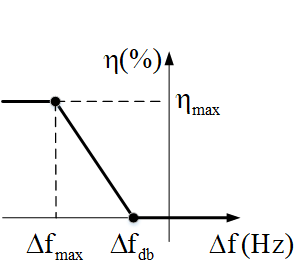}
		\caption{Illustrating the proportion of interrupted devices based on any locally measured frequency deviation from nominal, $\Delta f[k]:=f[k] - f_0$.}
		\label{fig:diflocaltimer}
	\end{figure}
	The aggregate effect of the control law~\eqref{eq:eta} in a simple two-area system is shown in fig.~\ref{fig:timerbased} for different values of $\eta_{\text{max}}$. It can be seen that larger $\eta_{\text{max}}$ results in more damping. When $\eta_{\text{max}}$ is zero, none of the ON devices are interrupted. In this case, all of the requests are rejected locally and no new energy packet request is sent to the aggregator. Thus, consumption decreases with a constant rate equal to packet completion rate which is $P^\text{rate}\bar{x}/\Delta t$. 
	Observe in Fig.~\ref{fig:timerbased} that the packet completion rate is relatively slow and have small impact on ROCOF (blue curve). Increasing packet interruption leads to a more sudden and larger drop in PEM demand for under-frequency events, which improves the ROCOF, maximum frequency deviation (also called the nadir point), and final frequency deviation. The demand starts to decrease, $\Delta t$ seconds after the occurrence of disturbance. Table~\ref{t.1} shows ROCOF, $\Delta f_{\text{nadir}}$ and final frequency deviation for different values of $\eta_\text{max}$, which illustrates the effectiveness of the proposed prioritization scheme. As seen in Table~\ref{t.1}, interrupting more energy packets (larger $\eta_{\text{max}}$) improves the frequency response of the system.\\ 
\begin{table}[htbp]
	\centering
	\caption{Characteristics of frequency response for different $\eta_{max}$}	
	\begin{tabular}{rrrr} 
	\toprule
		$\eta_{max}$ & ROCOF& $\Delta f_{\text{Nadir}}$ & $\Delta f_{\infty}$\\
		  &(mHz/sec)&(mHz) & (mHz) \\
		  \hline
		0 & 104 & 83 & 46  \\
        0.33 & 94 & 75 & 42  \\
        0.67 & 86 & 69 & 39 \\
        1 & 81 & 64 & 36 \\
        \bottomrule
	\end{tabular}
	\label{t.1}
\end{table}

	In a conventional power system, after a loss of generation, the frequency decreases rapidly until it achieves a minimum value and then it partially recovers due to the remaining generators' primary droop controllers and system damping. In this paper, $\Delta f_{\text{nadir}}$ is defined as the frequency deviation at nadir point. According to~\eqref{eq:eta}, when the frequency achieves its nadir point, $\eta$ achieves its maximum value, which means that the largest proportion of devices are interrupted at this time. After the frequency recovers away from the nadir point, $\eta$ decreases, but this does not change the number of interrupted devices since no new devices are turned ON after the nadir point\footnote{Future wor k will focus on using batteries to discharge or including the ability to switch devices back ON during the frequency deviations.}. In other words, the frequency deviation at the nadir point, provides the largest $\eta$, which determines the damping provided by the PEM fleet.


    	\begin{figure}
		\centering
		\includegraphics[width=1\linewidth]{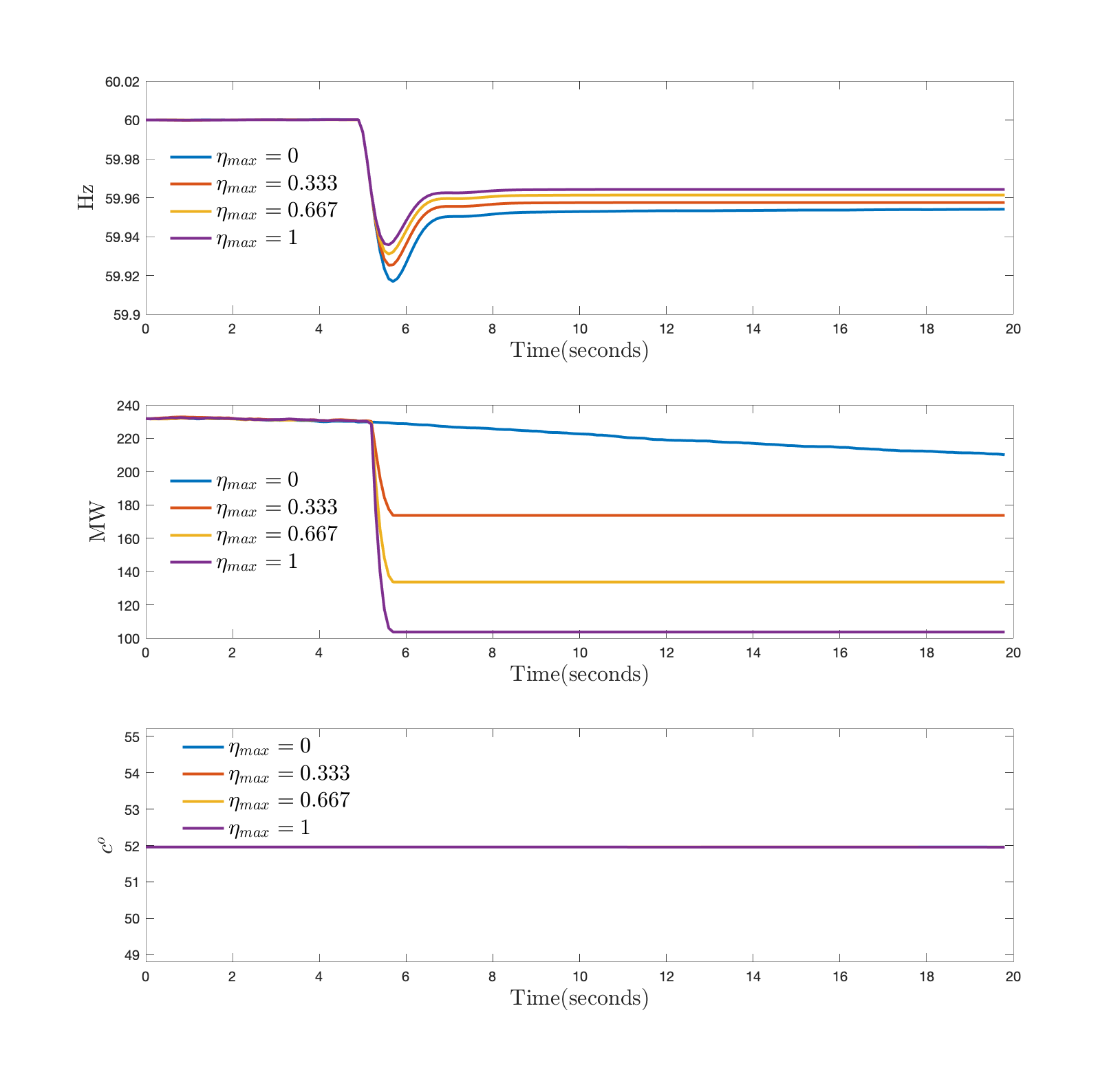}
		\caption{Aggregate power, frequency and average temperature for different values of $\eta_{max}$ for 400,000 DERs. A 500 MW drop in total generation occurs at $t=5$ seconds.}
		\label{fig:timerbased}
	\end{figure}
    
	Clearly, the decentralized PEM scheme can provide damping as seen in Fig. \ref{fig:timerbased}. This damping is achieved with no coordination between DERs and aggregator. Being able to estimate the damping available from a fleet of DERs would be valuable for grid operators and market participants interested in fast frequency response (FFR) markets~\cite{aemo2017}. The next section provides an accurate online estimate of the equivalent damping provided by a fleet of DERs operating under~\eqref{eq:eta}.

\section{Predicting the aggregate response}\label{sec:predict}
    
    In this section, the equivalent damping provided by a fleet of DERs operating under the decentralized control law in~\eqref{eq:eta} is estimated analytically. In addition,~\eqref{eq:eta} is augmented to combine both local timer and temperature information in the fully decentralized packet interruption scheme and estimate the resulting equivalent damping. While the PEM aggregator has direct access to the distribution of timer states, the temperature distribution required for the online estimate must be obtained indirectly with a state estimator~\cite{duffaut2020PES,duffaut2018PSCC}.
    
    Next, an analytical estimate of the equivalent damping for just the aggregator's timer state distribution is provided. 
   
   \subsection{Timer-based prioritization}\label{subsec:timerthreshold} 
	
	In order to predict the equivalent PEM damping, the aggregator leverages available real-time information about the distribution of timer states. The aggregator makes use of the following simplifying assumptions that are reasonable in a practical setting, to estimate the damping. 
    
\begin{assumption}\label{ass:1}
    The DER population is large enough and operates near nominal power so that the timer bins are well-approximated by $\bar{x}[k]$ in \eqref{eq:xbar}.  
\end{assumption}

\begin{assumption}\label{ass:2}
    The average number of devices at each bin $\bar{x}_{\rm{nom}}$ does not change with time. That is, $\bar{x}_{\rm{nom}} \approx \bar{x}[k]$.

\end{assumption}

    \begin{assumption}\label{ass:3}
    The frequency response event duration is less than 30~seconds. That is, it is assumed that the nadir point is such that $\eta(\Delta f_{\rm{nadir}}) > \frac{30}{\delta}$. This implies that one can neglect the effect of packet completions rate.
    \end{assumption}
    For example, consider a PEM system with $\delta=180$s and frequency event with a nadir such that $\eta(\Delta f_{\text{nadir}})=0.9$. Then, all devices with timer states $t_n > 18$ are interrupted and natural packet completions will not occur for 162 seconds and can, therefore, be neglected.
     The analytical estimate is embodied by the following theorem.
\begin{theorem}\label{thm:1}	
		Let $\delta$, $\Delta f$ and $P^{\rm{rate}}$, and $n_{\rm{p}}$ be fixed for a DER fleet with decentralized control policy~\eqref{eq:eta} and chosen  $\eta_{\rm{max}}$, $\Delta f_{\rm{max}}$ and $\Delta f_{\rm{db}}$. Under assumptions~\ref{ass:1},~\ref{ass:2} and~\ref{ass:3}, the PEM fleet responds to frequency deviations with an equivalent damping of 
     \begin{align}\label{eq:damp_linear}
		D_{\rm{PEM}}
		&=
		\left \{
	\begin{array}{rl}
    	0, &  \!\!\!\! \Delta f_{\rm{db}} < \Delta f_{\rm{nadir}} \\ 
    	P^{\rm{rate}}\frac{\eta_{\rm{max}}n_{\rm{p}} \bar{x}_{\rm{nom}}}{\Delta f_{\rm{db}}-\Delta f_{\rm{max}}}, & \!\!\!\! \Delta f_{\rm{nadir}}\in [\Delta f_{\rm{max}}, \Delta f_{\rm{db}}].\\ 
    	P^{\rm{rate}}\frac{\eta_{\rm{max}}n_{\rm{p}} \bar{x}_{\rm{nom}}}{\Delta f_{\rm{db}}-\Delta f_{\rm{nadir}}}, & \!\!\!\! \Delta f_{\rm{nadir}} < \Delta f_{\rm{max}} 
	\end{array}
    	\right.
    \end{align}
  \end{theorem}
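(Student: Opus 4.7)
The plan is to express the PEM demand reduction as a function of $\Delta f$ under the uniform-timer assumption, take its value at the nadir as the maximum reduction achieved during the event, and then define $D_{\text{PEM}}$ as the slope of the secant line through $(\Delta f_{\text{db}}, 0)$ and $(\Delta f_{\text{nadir}}, \Delta P^{\text{PEM}}_{\max})$. The three branches of~\eqref{eq:damp_linear} will then fall out by substituting the three branches of~\eqref{eq:eta} into that secant slope.

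First I would invoke Assumption~\ref{ass:1} to approximate the timer-state vector $x[k]$ in~\eqref{eq:histogram} by a uniform distribution of mass $\bar{x}[k]$ in each of the $n_{\text{p}}$ bins, and then Assumption~\ref{ass:2} to hold $\bar{x}[k] \approx \bar{x}_{\text{nom}}$ over the event window. The local law~\eqref{eq:eta} interrupts the devices whose timer lies in the top $\eta(\Delta f)$-fraction of bins, which under uniformity amounts to exactly $\eta(\Delta f)\, n_{\text{p}}\, \bar{x}_{\text{nom}}$ devices, each shedding $P^{\text{rate}}$. Assumption~\ref{ass:3} then lets me discard the slow packet-completion channel over the short event horizon, so the PEM demand reduction collapses to
\begin{equation*}
\Delta P^{\text{PEM}}(\Delta f) = \eta(\Delta f)\, n_{\text{p}}\, \bar{x}_{\text{nom}}\, P^{\text{rate}}.
\end{equation*}

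Next I would observe that $\eta(\cdot)$ is monotone nondecreasing in $|\Delta f|$, so it attains its event-time maximum at the nadir; combined with the paper's remark that interrupted devices are not re-energized afterwards, the maximum reduction sustained through the event is $\Delta P^{\text{PEM}}_{\max} = \eta(\Delta f_{\text{nadir}})\, n_{\text{p}}\, \bar{x}_{\text{nom}}\, P^{\text{rate}}$. Defining $D_{\text{PEM}}$ as the linear coefficient that a $-D_{\text{PEM}}\Delta f$ term in~\eqref{eq:swingDyn2} would need to reproduce $\Delta P^{\text{PEM}}_{\max}$ at $\Delta f_{\text{nadir}}$ while giving zero reduction at $\Delta f_{\text{db}}$ yields the secant
\begin{equation*}
D_{\text{PEM}} = \frac{\eta(\Delta f_{\text{nadir}})\, n_{\text{p}}\, \bar{x}_{\text{nom}}\, P^{\text{rate}}}{\Delta f_{\text{db}} - \Delta f_{\text{nadir}}}.
\end{equation*}
Plugging in the three branches of~\eqref{eq:eta} finishes the proof: the deadband branch zeroes the numerator for case one; on the middle branch the factor $(\Delta f_{\text{nadir}} - \Delta f_{\text{db}})$ hidden in $\eta$ cancels the denominator, leaving the $\Delta f_{\text{nadir}}$-free middle expression; on the saturated branch, $\eta(\Delta f_{\text{nadir}}) = \eta_{\text{max}}$ passes through unchanged into the third case.

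The main conceptual hurdle is justifying the secant-style definition of $D_{\text{PEM}}$: because interrupted devices latch off rather than modulating continuously with $\Delta f$, PEM does not furnish true differential damping, and the equivalence with the $-D\Delta\omega$ term in~\eqref{eq:swingDyn2} is a static-equivalent designed to match the nadir rather than the full trajectory. Once that modeling choice is granted, Assumptions~\ref{ass:1}, \ref{ass:2}, and~\ref{ass:3} absorb the non-idealities (non-uniform bins, drift of $\bar{x}[k]$, and natural completions respectively) and the theorem reduces to branch-by-branch algebra.
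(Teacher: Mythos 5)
Your proposal is correct and follows essentially the same construction as the paper's proof: use Assumptions~1--3 to reduce the interrupted load to $P^{\rm rate}\eta(\Delta f_{\rm nadir})\,n_{\rm p}\bar{x}_{\rm nom}$ and divide by the frequency excursion beyond the deadband to obtain the equivalent damping, with the three branches following from~\eqref{eq:eta}. Your explicit framing of $D_{\rm PEM}$ as a secant slope anchored at $(\Delta f_{\rm db},0)$ and your note on the latching (non-differential) nature of the response are just a more careful articulation of what the paper does implicitly.
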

\begin{proof}
    The proof is by construction. A PEM-enabled DER can either naturally complete or interrupt its packet. From~\eqref{eq:eta}, if $\Delta f_{\text{nadir}} > \Delta f_{\text{db}}$, then no device is interrupted and the PEM fleet is not responsive to the frequency, so the equivalent damping is zero. 
    
    For larger frequency deviations, assumption~\ref{ass:3} ensures that one only has to consider packet interruptions. Thus, the total change in PEM load for system area $j$ is described as:
	\begin{align}\label{eq:accurate_dpm}
	\Delta P^{\text{PEM}}_{j}(\Delta f_{\text{nadir}}) = P^{\text{rate}}\sum_{i=\left\lceil (1-\eta\Delta f_{\text{nadir}})n_\text{p}\right \rceil } ^{n_\text{p}} x_i[k],
	\end{align}
	where $x_i[k]$ is the $i^{\text{th}}$ entry of an arbitrary timer states distribution $x[k]$. From Assumption~\ref{ass:1}, the total number of interruptions can be approximated by multiplying total number of ON devices and $\eta(\Delta f_\text{nadir})$. Therefore,~\eqref{eq:accurate_dpm} can be rewritten as follows:
	\begin{align}\label{eq:reducedpower2}
	\Delta P^{\text{PEM}}_{j}(\Delta f_{\text{nadir}}) \approx P^{\text{rate}}\eta(\Delta f_{\text{nadir}}) (\mathbf{1}^{\top}_{n_p} x[k]) ,
	\end{align}

	 In addition, from Assumption~\ref{ass:2}, $\mathbf{1}^{\top}_{n_p} x[k]\approx N_b \bar{x}_{\text{nom}}$. For $\Delta f_{\text{nadir}} \in [\Delta f_{\text{max}},\Delta f_{\text{db}}]$,~\eqref{eq:eta} gives $\eta(\Delta f_{\text{nadir}})=\left( \frac{(\Delta f_{\text{nadir}} - \Delta f_{\text{db}})\eta_{\text{max}}}{\Delta f_{\text{max}}-f_{\text{db}}}\right)$ and substituting this into~\eqref{eq:reducedpower2} yields
	\begin{align*}
	\Delta P^{\text{PEM}}_{j}(\Delta f_{\text{nadir}}) =P^{\text{rate}}\left(\frac{(\Delta f_{\text{nadir}} - \Delta f_{\text{db}})\eta_{\text{max}}}{\Delta f_{\text{max}}-f_{\text{db}}}\right) N_\text{b} \bar{x}_{\text{nom}}.
	\end{align*}
	Since this change in power occurs over frequency deviation $\Delta f_\text{nadir}-\Delta f_\text{db}$, the equivalent damping in~\eqref{eq:damp_linear} is obtained.
	
	Finally, for $\Delta f_{\text{nadir}} < \Delta f_{\text{max}}$,~\eqref{eq:eta} saturates and $\eta(\Delta f_{\text{nadir}})=\eta_{\text{max}}$. Then,  $\Delta P^{\text{PEM}}_{j}(\Delta f_{\text{nadir}}) \approx P^{\text{rate}} \eta_{\text{max}} N_\text{b}\bar{x}_{\text{nom}} .$ The equivalent damping is then 
	 $\frac{\Delta P^{\text{PEM}}_{j}(\Delta f_{\text{nadir}})}{\Delta f_{\text{nadir}}-\Delta f_{\text{db}}}$, which results in~\eqref{eq:damp_linear}.
	This concludes the proof. 
    \end{proof}


	\begin{remark}
     Theorem~\ref{thm:1} allows the PEM fleet to be modeled as a proportional controller with gain $ D_{\rm{PEM}}$. Note also, from~\eqref{eq:accurate_dpm}, that Theorem~\ref{thm:1} can be extended to estimate $D_{\rm{PEM}}$ in real-time for an arbitrary distribution of timer states. Nevertheless, $D_\text{PEM}$ will no longer be constant for all $\Delta f_{\rm{nadir}}\in [\Delta f_{\rm{max}}, \Delta f_{\rm{db}}]$. This represents ongoing work.
	\end{remark}
	Next, the proposed scheme is implemented on a two-area system to verify its performance.

	\section{Numerical validation with Two-area system}\label{sec:results}
	
	Consider the so-called two-area model \cite{anderson2012}. When a fleet of DERs under PEM interact with this system as in Fig.~\ref{fig:2area}, the steady state frequency deviation and damping  are  given, respectively, by
	\begin{align}\label{eq:damping}
	\nonumber \Delta f_{\infty}& =\frac{\Delta P_{G}}{D^{\text{actual}}_{\text{PEM}}+2(D_j+\frac{1}{R_j})} \mbox{   and  }\\
	D^{\text{actual}}_{\text{PEM}} & =\frac{\Delta P_{G}}{\Delta f_{\infty}}-2\left(D_j+\frac{1}{R_j}\right),
	\end{align}
	where $D_j$ and $R_j$ are the damping in MW/Hz and droop constant in Hz/MW in each area $j$. Here, it has been assumed that both areas have equal damping and inertia. PEM loads and generation drop are in area~2. The two-area system parameters and simulation setup with respect to PEM are presented in table~\ref{t.3}. Fig.~\ref{fig:2area} depicts the two area system interacting with PEM loads. Equation~\eqref{eq:damping} is used to compute the actual damping of the system in simulations. In what follows, $D^{\text{actual}}_{\text{PEM}}$ will be compared against~\eqref{eq:damp_linear} in Theorem~\ref{thm:1}.  

 	\begin{figure}
 		\centering
 		\includegraphics[width=1\linewidth]{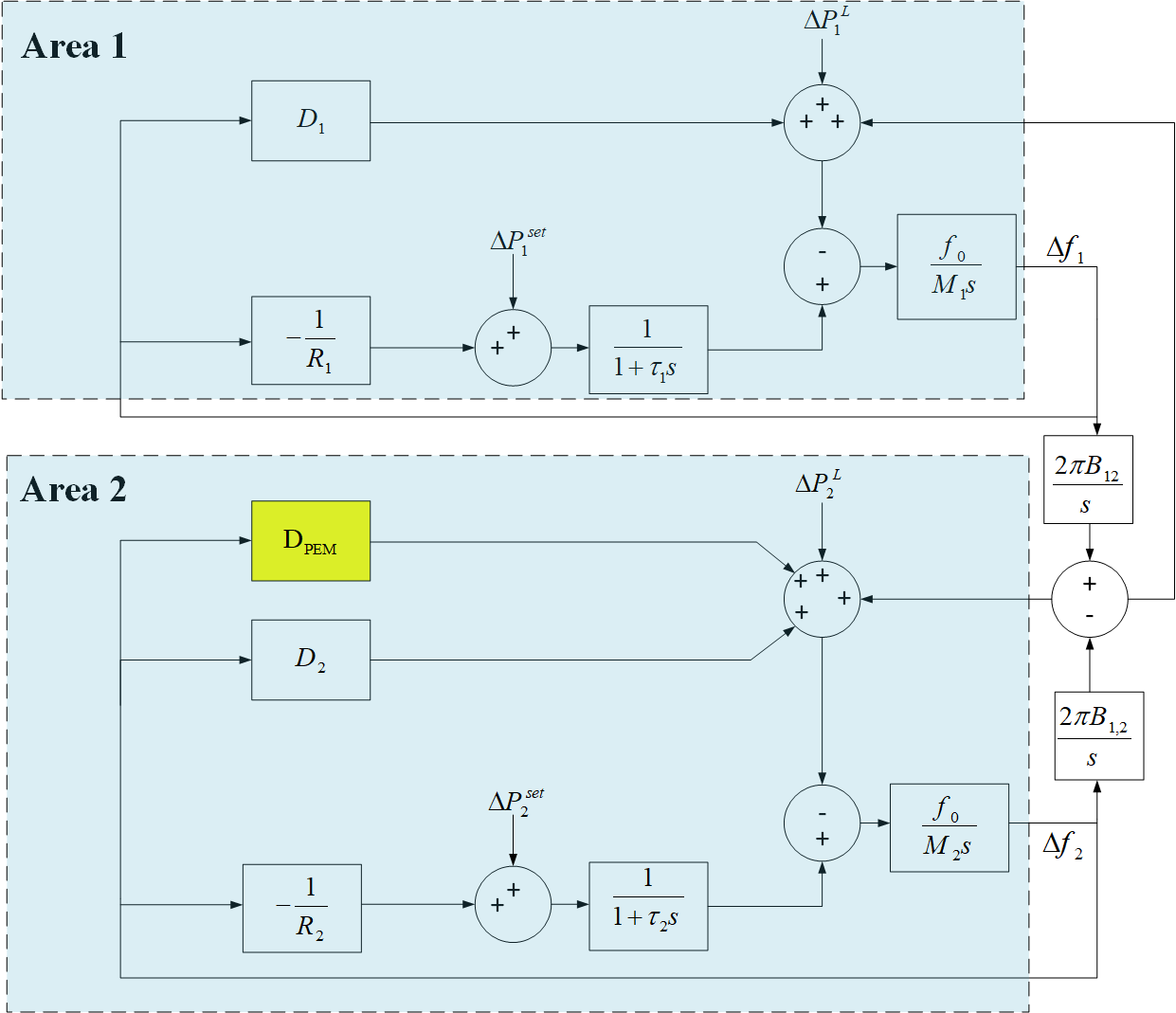}
 		\caption{Block diagram of primary frequency control of a two-area power system}
 		\label{fig:2area}
 	\end{figure}

\begin{table}[htbp]
	\centering
	\caption{Simulation Parameters}	
	\begin{tabular}{ll} \toprule
	
		Parameter & Value\\ 
		\hline
		H  & 5 seconds\\
		$f_{0}$ & 60 Hz  \\
		$D_j$ &  200 MW/Hz\\ 
		$R_j$ &  $\frac{1}{5000}$ Hz/MW\\
		$\Delta f_{\text{db}}$& 20 mHz  \\
		$\Delta f_{\text{max}}$& 100 mHz \\
		$T^{set}_{n}$& 52 $C^{o} $ \\ 
		$T^{max}_{n}$& 55.2 $C^{o} $ \\
		$T^{min}_{n}$& 48.8 $C^{o} $ \\
		$\Delta t$ & 100 ms  \\
		Simulation Time & 20 seconds  \\
		MTTR & 3 min \\
		Fleet size & 400,000 \\
		Disturbance & -500MW @ $t=5$ sec \\
		Epoch & 3 min\\ \bottomrule
	\end{tabular}
\label{t.3}
\end{table}
	
	\subsection{Equivalent damping of PEM loads}
	As seen in the previous subsection, when energy packets are interrupted ($\eta_{max} \neq 0$), the number of natural completions during the disturbance is negligible. Therefore,~\eqref{eq:damp_linear} can be used to calculate the estimated damping of the population of PEM loads. Then, the PEM fleet operating under decentralized control policy from~\eqref{eq:eta} can be modeled as a simple, lumped proportional frequency-responsive demand $D_{\text{PEM}}$. 
	Fig.~\ref{fig:comparison}, compares simulation results using PEM loads against equivalent proportional controller. It can be seen that for $\eta_{\text{max}}=0$, the error is higher compared to other cases since packet completion rates are not negligible. The Root Mean Square Error (RMSE) for different values of $\eta_{\text{max}}$ and relative error in damping estimation is presented in table~\ref{t:2}. The estimation error is calculated as $100\times\frac{D_{\text{PEM}}(\Delta f_{\text{nadir}})-D^{\text{actual}}_{\text{PEM}}}{D^{\text{actual}}_{\text{PEM}}}$.
	The results show that the relative accuracy of the damping estimate improves as the frequency deviation increases, however, for all estimates, the resulting frequency response in Fig.~\ref{fig:comparison} matches closely with RMSE$< 1.2$mHz. The estimate is made solely based on distribution of timers, which is available to the aggregator. No online measurements or communication are required for DERs to respond to frequency.
	
	\begin{table}[htbp]
	\centering
	\caption{Accuracy of online estimation of Damping}	
	\begin{tabular}{rrr} 
	\toprule
		$\eta_{max}$ & RMSE & Estimation \\
		             &(mHz) & error (\%) \\
		  \hline
        0 & 1.1 & -   \\
        0.33 & 0.6 & 12.4   \\
        0.67 & 0.6 & 5.5  \\
        1 & 0.5 & -0.5\\
        \bottomrule
	\end{tabular}
	\label{t:2}
\end{table}

	\begin{figure}
		\centering
		\includegraphics[width=1\linewidth]{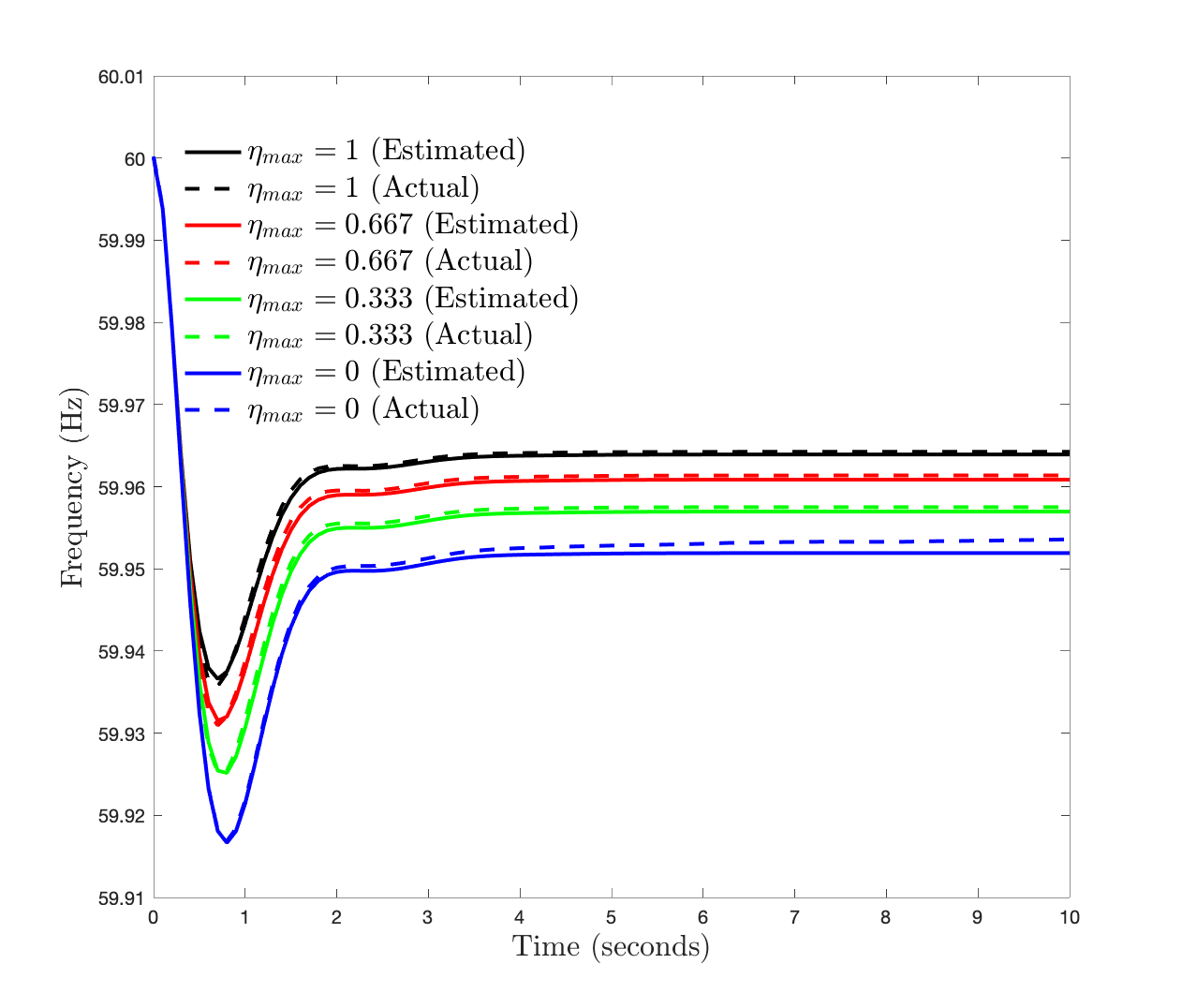}
		\caption{Estimation of frequency response for different values of $\eta_{max}$}
		\label{fig:comparison}
	\end{figure}
	
	\section{conclusion}\label{sec:conclusion}
	A packet-based model for primary frequency control was designed. It was shown through simulations that proposed method was able to provide fast frequency response and improved ROCOF and maximum frequency deviation in a fully decentralized manner. In addition, an equivalent damping for the fleet of DERs was calculated based on online measurements available to aggregator. Future work includes studying the combined temperature and timer prioritization and generalizing this work for over-frequency events by modifying the local control law together with using batteries.

\bibliographystyle{IEEEtran}
\bibliography{ref2}


\end{document}